\documentclass[12pt]{article}
\usepackage{graphicx,amsmath,amssymb,url,enumerate,mathrsfs,epsfig,color}
\usepackage{tikz}
\usepackage{float}

\usepackage{amsmath}
\usepackage{amssymb}
\usepackage{yfonts}
\usepackage{caption}

\usepackage{tikz,pgfplots}

\usetikzlibrary{calc, patterns,arrows.meta, arrows, shapes.geometric}

\usepackage{graphicx,amsmath,amssymb,url,enumerate,mathrsfs,epsfig,color}

\usetikzlibrary{decorations.text}
\usetikzlibrary{decorations.markings}

\pgfplotsset{compat=1.8}

\usepackage{enumerate}
\usepackage{boxedminipage}
\usepackage{makeidx}
\usepackage{multicol}
\usepackage{xcolor}
\usepackage{mathtools}
\usepackage{comment}

\usetikzlibrary{calc, patterns,arrows, shapes.geometric}

\newenvironment{proof}{\noindent{\bf Proof~:}}{\QED\medskip}
\def\QED{\hskip0.1em\hfill\null\ \null\nobreak\hfill
\kern3pt\lower1.8pt\vbox{\hrule\hbox
{\vrule\kern1pt\vbox{\kern1.7pt \hbox{$\scriptstyle
QED$}\kern0.2pt}\kern1pt\vrule}\hrule}}

\newtheorem{prop}{Proposition}[section]

\newtheorem{rem}{Remark}[section]
\newtheorem{definition}{Definition}[section]

\newtheorem{example}{Example}[section]
\title{Uniformity theory of weighted composites}
\author{Marcelo Epstein\footnote{University of Calgary, Calgary, Canada (Corresponding Author) \newline email: mepstein@ucalgary.ca} ~ and Mohammadjavad Javadi\footnote{University of Alberta, Edmonton, Canada \newline email: javadisi@ualberta.ca}}   
\date{}
\begin{document}
\maketitle

\begin{abstract} By including spatially varying volume fractions of the constituents, composites are obtained that are akin to functionally graded media. An extension of the mathematical apparatus of double groupoids is proposed to incorporate these materials with the aim of eventually classifying defects of misalignment and their time evolution.

\end{abstract}
\bigskip

{\bf Keywords}:  Material groupoids, material uniformity, volume fractions, extensions to a fibre bundle, slim and non-slim double groupoids.

\section{Introduction}

The theory of material composites can be regarded, in principle, as a subdiscipline of the general theory of mixtures \cite{bowen1, bowen2}. In this much more circumscribed work, on the other hand, we intend to emphasize the fact that each constituent of a composite must, by definition, satisfy a condition of {\it material uniformity} in the sense of Noll \cite{noll}. Since a composite made of several uniform constituents is not necessarily uniform, a continuous distribution of {\it defects of misalignment} \cite{jmps} arises in general, which may initiate a process of {\it material evolution} of the composite through a possible remodelling of the individual constituents, or through the onset of chemical reactions, or other such time-dependent events. Examples of the first kind \cite{eande, book} implicitly assume that the volumetric fractions of each of the constituents remain constant in space and time. In removing these limitations, we introduce the concept of {\it weighted uniformity} which allows us to expand the mathematical apparatus introduced in \cite{jmps} for the analysis of defects of misalignment.

The representation of binary composites in terms of double groupoids was first proposed in \cite{symmetry} and \cite{jmps}. Further developments were presented in \cite{victor}, and a generalization to $n$-ary composites in terms of $n$-fold groupoids was suggested in \cite{book}. The extension presented herein for the analysis of binary composites of varying volume fractions entails a slight generalization of the notion of material double groupoid. To make the presentation as self-contained as possible, Section \ref{sec:materialgroupoids} revisits in some detail the notion of material groupoid. After motivating an application to composite materials, Section \ref{sec:materialdouble} offers a review of the mathematical notion of double groupoid and its application to binary composites. In Section \ref{sec:volumefractions}, the bundle of volume fractions of a binary composite is introduced, and the notions of weighted groupoids and weighted double groupoids are defined by a systematic extension of their base manifold to a fibre bundle. Within this framework, an informal geometric description of material evolution in composites is provided, both in terms of material remodelling and temporal changes in volume fractions. Section \ref{sec:nonslim} considers the possibility of making use of non-slim double groupoids \cite{natale} via a possible application to uniform composites.

\section{Material groupoids}

\label{sec:materialgroupoids}

\subsection{Brief review of the notion of material groupoid}

Let the mechanical response of an elastic solid body $\mathcal B$ be described in some reference configuration by a constitutive equation of the form
\begin{equation}  \label{eq1}
{\bf T}= {\bf f}({\bf F}, X),
\end{equation}
where $\bf f$ is a symmetric-tensor-valued function of the deformation gradient tensor $\bf F$ and of the material point $X \in {\mathcal B}$. The value of $\bf f$ is interpreted as the Cauchy stress tensor $\bf T$. Two points $X_1, X_2 \in {\mathcal B}$ are made of the same material if they are {\it materially isomorphic} \cite{noll}, namely, if there is a non-singular linear map ${\bf P}_{12}$, called a {\it material isomorphism},\footnote{Technically speaking, this is an isomorphism between the tangent spaces to the manifold $\mathcal B$ at $X_1$ and $X_2$. Each such isomorphism is a member of the general linear group $GL(3,{\mathbb R})$.} such that
\begin{equation}
{\bf f}({\bf F},X_2)={\bf f}({\bf F\,P}_{12},X_1),\;\;\;\;\;\;\;\;\;\;\forall\,{\bf F} \in GL(3,{\mathbb R}).
\end{equation} 
For the purpose of conceptual visualization we will interpret a material isomorphism as an arrow with source $X_1$ and target $X_2$, as shown in Figure \ref{fig1}. A material automorphism (with $X_2=X_1$) is clearly a material symmetry ${\bf G}_1$ at $X_1$. In the suggested visual representation, a material symmetry is a loop-shaped arrow.
\begin{figure}[H]
\begin{center}
\begin{tikzpicture}[scale = 1]
\draw[thick] (0,0) ellipse (3cm and 2cm);
\draw[o-stealth'] (-1,0) to [bend left] (1.5,1);
\node[above] at  (0.5,1) {${\bf P}_{12}$};
\draw[-stealth'] (-1,0) arc(90:-260:0.4);
\node[below] at (-1.,-0.75) {${\bf G}_1$};
\node[above left] at (-1,0) {$X_1$};
\node [right] at (1.5,1) {$X_2$};
\node at (2.5,-1.5) {$\mathcal B$};

\end{tikzpicture}
\caption{Material isomorphisms as arrows}
\label{fig1}
\end{center}
\end{figure}

Consider the collection $\mathcal Z$ of all material isomorphisms induced by a given constitutive equation on $\mathcal B$. This collection is never empty, since the loop-shaped arrow corresponding to the trivial (unit) material symmetry at each point $X \in  \mathcal B$ is always available. Moreover, if ${\bf P}_{12}$ is a material isomorphism from $X_1$ to $X_2$, it follows that the inverse map ${\bf P}^{-1}_{12}$ is a material isomorphism from $X_2$ to $X_1$. Finally, if ${\bf P}_{12}$ is a material isomorphism from $X_1$ to $X_2$ and ${\bf P}_{23}$ is a material isomorphism from $X_2$ to $X_3$, the product ${\bf P}_{23}\,{\bf P}_{12}$ is a material isomorphism between $X_1$ and $X_3$. Indeed, the property of being made of the same material is an equivalence relation. In the graphical representation we may say that arrows can be composed if they are related in a tip-to-tail fashion, as shown in Figure \ref{fig2}.
\begin{figure}[H]
\begin{center}
\begin{tikzpicture}[scale = 1]
\draw[thick] (0,0) ellipse (3cm and 2cm);
\draw[o-stealth'] (-1,0) to [bend left] (1.5,1);
\node[above] at  (0.5,1) {${\bf P}_{12}$};
\node[above left] at (-1,0) {$X_1$};
\node [right] at (1.5,1) {$X_2$};
\draw[o-stealth'] (1.55,1.05) to [bend left] (1,-1);
\node[right] at (1.65,0) {${\bf P}_{23}$};
\node[below right] at (1,-1) {$X_3$};
\draw[o-stealth'] (1,-1) to [bend left] (-0.95,0);
\node[below left] at (-0.1,-0.5) {${\bf P}_{23}\,{\bf P}_{12}$};
\node at (2.5,-1.5) {$\mathcal B$};
\end{tikzpicture}
\caption{Composition of tip-to-tail as arrows}
\label{fig2}
\end{center}
\end{figure}
Informally speaking, the set of arrows $\mathcal Z$ defined over the set of objects (material points) $\mathcal B$, with the composition operation just described, constitutes the {\it material groupoid} associated with the body $\mathcal B$ and the given constitutive equation.

More generally, the concept of {\it groupoid}, as introduced a hundred years ago by Brandt \cite{brandt}, is a generalization of the notion of group. It encompasses, as demonstrated in the case of the material groupoid, the local symmetries at each element of a system with the distant symmetries between different elements \cite{weinstein}.

\begin{definition} {\rm  A {\it groupoid} ${\mathcal Z} \rightrightarrows {\mathcal B}$ consists of two sets, $\mathcal Z$ and $\mathcal B$, called, respectively, the set of {\it arrows} and the set of {\it objects}, related by two surjective {\it projection maps}, $\alpha, \beta: {\mathcal Z} \to {\mathcal B}$, called the {\it source} map and the {\it target} map, respectively. Just as in the case of a group, an associative operation (the {\it product} or {\it composition} of arrows) is defined in $\mathcal Z$. This operation, however, can only be carried out between two arrows $u, v \in {\mathcal Z}$ if the ``tip-to-tail'' condition
\begin{equation}
\beta(u)=\alpha(v)
\end{equation}
is satisfied. The result, denoted by $vu$, satisfies the conditions
\begin{equation}
\alpha(vu)=\alpha(u),\;\;\;\;\;\;\;\;\;\; \beta(vu)=\beta(v).
\end{equation}
For each object $b \in {\mathcal B}$ a unique {\it identity arrow} $\epsilon(b)$ exists such that $\epsilon(u)\,u=u$, for every $u \in {\mathcal Z}$ such that $\beta(u)=b$, and $ u\,\epsilon(u)=u$, for every  $u \in {\mathcal Z}$ such that $\alpha(u)=u$. For each arrow $u \in {\mathcal Z}$ there is a (unique) {\it inverse arrow} $u^{-1} \in {\mathcal Z}$ such that $u^{-1}\,u=\epsilon(\alpha(u))$, and $u\,u^{-1}=\epsilon(\beta(u))$.
}
\end{definition}

\begin{rem} {\rm In spite of its length, the definition of a groupoid is a rather intuitive rendering of the main features of examples such as the set of material isomorphisms of a body, or the set of operations that preserve the appearance of a tiled floor (including the mutual exchange of congruent tiles). A groupoid can also be regarded as a generalization of the concept of equivalence relation, a generalization that takes into account that an object may be equivalent to another (including itself) in a multiplicity of ways.
}
\end{rem}

\begin{definition} {\rm A groupoid is {\it transitive} if there is at least one arrow between every pair of objects}
\end{definition}

\begin{example} {\rm A body all of whose points are made of the same material is said to be {\it uniform}. Consequently, a body is uniform if, and only if, its material groupoid is transitive.}
\end{example}

At each point $b \in {\mathcal B}$, the collection of all the arrows $u$ of a groupoid ${\mathcal Z} \rightrightarrows {\mathcal B}$ such that $\alpha(u)=\beta(u)=b$ forms a group ${\mathcal G}_b$ called the {\it vertex group} at $b$. It is not difficult to show that if $u$ is an arrow from $b_1=\alpha(u)$ to $b_2=\beta(u)$ then the corresponding vertex groups ${\mathcal G}_1$ and ${\mathcal G}_2$ are conjugate, that is,
\begin{equation}
{\mathcal G}_2=u\,{\mathcal G}_1\, u^{-1}.
\end{equation}
In the case of the material groupoid this equation translates into the evident statement that if two points of a body are made of the same material their material symmetry groups are conjugate (the conjugation arising from the fact that in a given reference configuration the points may not be in the same reference state).

\section{Material double groupoids}
\label{sec:materialdouble}

\subsection{Composites and weighted uniformity}
\label{sec:weighted}

\begin{definition}{\rm An elastic body $\mathcal B$ is a {\it weighted n-ary elastic composite} (or simply a {\it composite}) if the following conditions are satisfied:
\begin{enumerate}
\item The constitutive equation can be expressed as a sum
\begin{equation}
{\bf T}=\sum_{\alpha=1}^n  k_\alpha(X)\;{\bf f}_\alpha ({\bf F}, X),
\end{equation}
where each of the coefficient scalar functions $k_\alpha$ is positive definite and
\begin{equation}
\sum_{\alpha=1}^n k_\alpha = 1.
\end{equation}
\item Each of the functions ${\bf f}_\alpha \;(\alpha=1,...,n)$ satisfies the {\it uniformity} property
\begin{equation}
 {\bf f}_\alpha({\bf F}, X_2) = {\bf f}_\alpha({\bf F\, P}^\alpha_{12}, X_1),\;\;\;\;\;\;\;\;\forall \; {\bf F} \in GL(3,{\mathbb R}),
\end{equation}
for each pair points $X_1, X_2 \in {\mathcal B}$, where ${\bf P}^\alpha_{12}$ is a material isomorphism between $X_1$ and $X_2$ associated with the function ${\bf f}_\alpha$.
\end{enumerate}
}
\end{definition}

A composite is thus made up from $n$ uniform {\it constituents}, each of which contributes additively to the total constitutive response in proportion to its {\it volume fraction} $k_\alpha$.

\begin{definition}\label{def:weightedunif} {\rm A composite $\mathcal B$ is said to enjoy the property of {\it weighted uniformity} if for every pair of points $X_1, X_2 \in {\mathcal B}$, the intersection of the sets of material isomorphisms between these points of all the constituents is non-empty, namely
\begin{equation}
\bigcap\limits_{\alpha=1}^n {\mathcal P}^\alpha_{12}  \ne \emptyset,
\end{equation}
where ${\mathcal P}^\alpha_{12}$ denotes the collection of all material isomorphisms ${\bf P}^\alpha_{12}$ between $X_1$ and $X_2$ in the constituent $\alpha$.
}
\end{definition}

\begin{rem} {\rm If each of the volume fractions $k_\alpha$ is spatially constant, weighted uniformity implies uniformity, as defined in \cite{jmps}).
}
\end{rem}

Weighted uniformity does not imply ordinary uniformity, since by having spatially variable volume fractions the resulting body can be regarded as a {\it (functionally) graded material}. What weighted uniformity does imply is that there is no misalignment between the various constituents. Moreover, the symmetry groups of the composite constitutive function turn out to be mutually conjugate. Such materials, regardless of whether they are composites or not, are called {\it unisymmetric} and have been studied in \cite{eandmdel}. Weighted uniformity on the other hand, confers to the total composite material additional properties not envisioned in that study of the more general case of functionally graded materials.

\subsection{Double groupoids}
\label{sec:double}

A faithful mathematical representation of weighted composites can be achieved by means of $n$-fold groupoids \cite{book}. The case of binary weighted composites (with $n=2$) corresponds to {\it double groupoids}, which have been studied in greater detail \cite{brownmac, natale}. The emergence of the concept of double groupoid in the context of solid mechanics is to to be expected upon realizing that in the case of binary composites there are two underlying material groupoids, one for each constituent.\footnote{For historical reasons, these groupoids are called, respectively, horizontal and vertical, a designation that is not amenable to be extended to more than two constituents. Nevertheless, we will retain it for the binary case.} These two groupoids share the same set of objects $\mathcal B$, but each of them has a different set of arrows, $\mathcal H$ and $\mathcal V$, and their corresponding projection maps $\alpha_H, \beta_H: {\mathcal H} \to {\mathcal B}$ and $\alpha_V, \beta_V: {\mathcal V} \to {\mathcal B}$. A double groupoid can be thought of as a superstructure $\mathcal Q$ that encompasses the two constituent groupoids. 
\begin{definition} {\rm A {\it double groupoid} is a set ${\mathcal Q}$ equipped with 2 groupoid structures
whose bases are themselves ordinary groupoids ${\mathcal H}$ and ${\mathcal V}$, with a common base $\mathcal B$. Denoting the compositions of the first and second groupoid structures of $\mathcal Q$ by $\odot_V$ and $\odot_H$, respectively,
the following {\it compatibility condition} must be satisfied
\begin{equation} \label{dg5again}
({\mathbb A}\odot_H{\mathbb B})\odot_V({\mathbb C}\odot_H{\mathbb D})=({\mathbb A}\odot_V{\mathbb C})\odot_H({\mathbb B}\odot_V{\mathbb D})\;\;\;\;\;\;\;\;\;\;\;\;\forall\,{\mathbb A},{\mathbb B},{\mathbb C},{\mathbb D} \in {\mathcal Q}
\end{equation}
whenever the operations are possible. Other conditions of compatibility between the two structures can be summarized by stating that the structure maps of each groupoid structure of $\mathcal Q$ are groupoid morphisms with respect to the other structure.\footnote{These properties will be described below more explicitly. A groupoid morphism is a map (or {\it functor}) that assigns to each arrow of one groupoid an arrow of the other groupoid, and that preserves identities and compositions.}

}
\end{definition}

A double groupoid $\mathcal Q,$ as just defined, is endowed with two groupoids structures, called the horizontal structure and the vertical structure, as shown in the diagram of Equation (\ref{eq:doublegroupoid}). The horizontal structure has as its set of objects the vertical groupoid $\mathcal V$, while the vertical structure has as its objects the horizontal groupoid $\mathcal H$. Consider first the horizontal structure of $\mathcal Q$, and let ${\tilde \alpha}_H, {\tilde \beta}_H: {\mathcal Q}\to {\mathcal V}$ denote its source and target maps. The source ${\tilde \alpha}_H({\mathbb A})$ of an element $\mathbb A$ of $\mathcal Q$ is thus an arrow in $\mathcal V$, and so is its target ${\tilde \beta}_H({\mathbb A})$. In the vertical structure, on the other hand, with projections ${\tilde \alpha}_V, {\tilde \beta}_V: {\mathcal Q}\to {\mathcal H}$, the same element ${\mathbb A} \in {\mathcal Q}$ will have as its source and target two arrows of $\mathcal H$, ${\tilde \alpha}_H({\mathbb A})$ and ${\tilde \beta}_B({\mathbb A})$. For this reason, the elements of $\mathcal Q$ are called {\it squares}, as they project on two pairs of arrows, one pair from each of the constituent groupoids. Moreover, these four arrows are consistent in the sense that further projection of these arrows on the common set of objects $\mathcal B$ results in four corner points. For example, the left lower corner corresponding to the square $\mathbb A$ is obtained as $\alpha_V({\tilde \alpha}_H({\mathbb A}))=\alpha_H({\tilde \alpha}_V({\mathbb A}))$.

\begin{equation} \label{eq:doublegroupoid}
\begin{tikzpicture}[ scale=1,baseline=(current  bounding  box.east)]]
\tikzset{->/.style={decoration={
  markings,
  mark=at position .98 with {\arrow{stealth'}}},postaction={decorate}}}
\node at (0,0) {$\mathcal H$};
\node at (0,3) {$\mathcal Q$};
\node at (3,0) {$\mathcal B$};
\node at (3,3) {$\mathcal V$};
\draw[-stealth'] (0.5,0.1)--(2.5,0.1);
\node[above] at (1.5,0.1) {$\alpha_H$};
\draw[-stealth'] (0.5,-0.1)--(2.5,-0.1);
\node[below] at (1.5,-0.1) {$\beta_H$};
\draw[-stealth'] (0.5,3.1)--(2.5,3.1);
\node[above] at (1.5,3.1) {$\tilde{\alpha}_H$};
\draw[-stealth'] (0.5,2.9)--(2.5,2.9);
\node[below] at (1.5,2.9) {$\tilde{\beta}_H$};
\draw[-stealth'] (-0.1,2.5)--(-0.1,0.5);
\node[left] at (-0.1,1.5) {$\tilde{\alpha}_V$};
\draw[-stealth'] (0.1,2.5)--(0.1,0.5);
\node[right] at (0.1,1.5) {$\tilde{\beta}_V$};
\draw[-stealth'] (2.9,2.5)--(2.9,0.5);
\node[left] at (2.9,1.5) {${\alpha}_V$};
\draw[-stealth'] (3.1,2.5)--(3.1,0.5);
\node[right] at (3.1,1.5) {${\beta}_V$};
\end{tikzpicture}
\end{equation}

To complete the description of the horizontal and vertical groupoid structures, it is necessary to define the respective composition laws, units, and inverses. Let $\mathbb A$ be a square in $\mathcal Q$, as represented in (\ref{dg2}). Its two horizontal sides $s$ and $t$ are obtained as source and target of $\mathbb A$ using the projections of the vertical structure, that is, with the maps $\tilde{\alpha}_V$ and $\tilde{\beta}_V$, respectively. Similarly, the vertical sides $\hat s$ and $\hat t$ are the source and target projections of $\mathbb A$ according to the horizontal-structure projections $\tilde{\alpha}_H$ and $\tilde{\beta}_H$.

\begin{equation} \label{dg2}
\begin{tikzpicture}[scale=0.75, baseline=(current  bounding  box.center)]
  \draw[thick,-stealth'] (0,0) -- (3,0);
  \draw[thick,-stealth'] (0,3) -- (3,3);
  \draw[thick,-stealth'] (3,0) -- (3,3);
  \draw[thick,-stealth'] (0,0) -- (0,3);
\node at (1.5,1.5) {$\mathbb A$};
\node at (4.5,1.5) {${\hat t}={\tilde \beta}_H({\mathbb A})$};
\node at (-1.5,1.5) {${\hat s}={\tilde \alpha}_H({\mathbb A})$};
\node[above] at (1.5,3) {$t={\tilde \beta}_V({\mathbb A})$};
\node[below] at (1.5,0) {$s={\tilde \alpha}_V({\mathbb A})$};
\end{tikzpicture}
\end{equation}
Since $\mathcal Q$ is endowed with two different groupoid structures, each structure must have its own well-defined product, which will be denoted by $\odot_H$ and $\odot_V$, respectively, for the horizontal and vertical structures. In the horizontal structure, the product of two squares, ${\mathbb A} \odot_H {\mathbb A}'$, can only be carried out if ${\hat s}={\hat t}'$, in agreement with the usual concatenation criterion in a groupoid.

The composition law\index{double groupoid!compositions of squares} of the horizontal structure of $\mathcal Q$ is obtained by applying the composition law of the horizontal side groupoid $\mathcal H$ to each pair of adjacent horizontal arrows of the two squares. This operation is conveniently represented graphically as
\begin{equation} \label{dg3}
\begin{tikzpicture}[baseline=(current  bounding  box.center)]
\foreach \x in {0,1.5,4.5}
{  \draw[thick,stealth'-] (1.5+\x,0) -- (0+\x,0);
  \draw[thick,stealth'-] (1.5+\x,1.5) -- (0+\x,1.5);
  \draw[thick,-stealth'] (1.5+\x,0) -- (1.5+\x,1.5);
  \draw[thick,-stealth'] (0+\x,0) -- (0+\x,1.5);}
\node[left] at(-0.,0.75) {${\mathbb A}\odot_H {\mathbb A}'\;=\;{\hat s}'$};
\node[above] at(0.75,1.5) {$  t'$};
\node[above] at(2.25,1.5) {$ t$};
\node[below] at(0.75,-0.1) {$  s'$};
\node[below] at(2.25,0) {$ s$};
\node[right] at (3,0.75) {${\hat t}$};
\node at (3.75,0.75) {$=$};
\node[left] at (4.5,0.75) {${\hat s'}$};
\node[above] at (5.25,1.5) {$tt'$};
\node[below] at (5.25,0) {$ss'$};
\node[right] at (6,0.75) {${\hat t}$};
\end{tikzpicture}
\end{equation}
The unit square ${\tilde\epsilon}_H({\hat u})$ of the horizontal structure at the vertical arrow $\hat u$ is the square
\begin{equation} \label{dg3a}
\begin{tikzpicture}[baseline=(current  bounding  box.center)]
\foreach \x in {0}
{  \draw[thick,stealth'-] (1.5+\x,0) -- (0+\x,0);
  \draw[thick,stealth'-] (1.5+\x,1.5) -- (0+\x,1.5);
  \draw[thick,-stealth'] (1.5+\x,0) -- (1.5+\x,1.5);
  \draw[thick,-stealth'] (0+\x,0) -- (0+\x,1.5);}
\node[left] at(-0.,0.75) {${\hat u}$};
\node[above] at(0.75,1.5) {$ \epsilon_H({\beta}_V({\hat u}))$};
\node[below] at(0.75,-0.1) {$ \epsilon_H({\alpha}_V({\hat u}))$};
\node[right] at (1.5,0.75) {${\hat u}$};
\end{tikzpicture}
\end{equation}

Similarly, in the vertical structure the product ${\mathbb A} \odot_V{\mathbb A}'$ can only be carried out if $s=t'$, or graphically
\begin{equation} \label{dg4}
\begin{tikzpicture}[baseline=(current  bounding  box.center)]
\foreach \x in {0,-1.5}
{  \draw[thick,stealth'-] (1.5,0+\x) -- (0,0+\x);
  \draw[thick,stealth'-] (1.5,1.5+\x) -- (0,1.5+\x);
  \draw[thick,-stealth'] (1.5,0+\x) -- (1.5,1.5+\x);
  \draw[thick,-stealth'] (0,0+\x) -- (0,1.5+\x);}
\node[left] at(-0.,0) {${\mathbb A}\odot_V {\mathbb A}'\;=\;\;\;\;\;$};
\node[left] at (0,0.75) {${\hat s}$};
\node[above] at(0.75,1.5) {$  t$};
\node[right] at (1.5,0.75) {${\hat t}$};
\node[right] at (1.5,-0.75) {${\hat t}'$};
\node[below] at(0.75,-1.5) {$s'$};
\node[left] at (0,-0.75) {$\hat s'$};
\node[right] at (2.,0.1) {$=$};
\node[left] at (3.5,0.1) {$ {\hat s} {\hat s}'$};
\foreach \x in {3.5}
{\draw[thick,stealth'-] (1.5+\x,-0.75) -- (0+\x,0-0.75);
  \draw[thick,stealth'-] (1.5+\x,1.5-0.75) -- (0+\x,1.5-0.75);
  \draw[thick,-stealth'] (1.5+\x,0-0.75) -- (1.5+\x,1.5-0.75);
  \draw[thick,-stealth'] (0+\x,0-0.75) -- (0+\x,1.5-0.75);}
\node[right] at (5,0.1) {${\hat t}{\hat t}'$};
\node[above] at (4.25,0.75) {$t$};
\node[below] at (4.25,-0.75) {$s'$};
\end{tikzpicture}
\end{equation}
The unit square ${\tilde\epsilon}_V({u})$ of the vertical structure at the horizontal arrow $ u$ is the square
\begin{equation} \label{dg4a}
\begin{tikzpicture}[baseline=(current  bounding  box.center)]
\foreach \x in {0}
{  \draw[thick,stealth'-] (1.5+\x,0) -- (0+\x,0);
  \draw[thick,stealth'-] (1.5+\x,1.5) -- (0+\x,1.5);
  \draw[thick,-stealth'] (1.5+\x,0) -- (1.5+\x,1.5);
  \draw[thick,-stealth'] (0+\x,0) -- (0+\x,1.5);}
\node[left] at(-0.,0.75) {$ {{\epsilon}_V({ \alpha}_H({ u}))}$};
\node[above] at(0.75,1.5) {$ u$};
\node[below] at(0.75,-0.1) {$ u$};
\node[right] at (1.5,0.75) {${ {\epsilon}_V({ \beta}_H({ u}))}$};
\end{tikzpicture}
\end{equation}

The definition of a double groupoid is completed by the verification of a compatibility condition between the two structures, namely,
\begin{equation} \label{dg5}
({\mathbb A}\odot_H{\mathbb B})\odot_V({\mathbb C}\odot_H{\mathbb D})=({\mathbb A}\odot_V{\mathbb C})\odot_H({\mathbb B}\odot_V{\mathbb D}),
\end{equation}
whenever the operations are possible. Graphically, this condition means that, in a square made of four squares whose edges in contact are matched, the result is the same whether one first composes horizontally and then vertically, or vice versa. In other words, the large square shown below makes sense.
\begin{equation} \label{dg6}
\begin{tikzpicture}[baseline=(current  bounding  box.center)]
\foreach \x in {0,1.5}
{\foreach \y in {0,1.5}
{  \draw[thick,stealth'-] (1.5+\x,0+\y) -- (0+\x,0+\y);
  \draw[thick,stealth'-] (1.5+\x,1.5+\y) -- (0+\x,1.5+\y);
  \draw[thick,-stealth'] (1.5+\x,0+\y) -- (1.5+\x,1.5+\y);
  \draw[thick,-stealth'] (0+\x,0+\y) -- (0+\x,1.5+\y);}}
\node at (0.75,0.75) {$\mathbb D$};
\node at (0.75,2.25) {$\mathbb B$};
\node at (2.25,0.75) {$\mathbb C$};
\node at (2.25,2.25) {$\mathbb A$};
\end{tikzpicture}
\end{equation}

\begin{rem} {\rm The {\it double source map} assigning to each square $\mathbb A$ its sources in the horizontal and vertical structures, ${\tilde \alpha}_H({\mathbb A}), {\tilde \alpha}_V({\mathbb A})$, is often assumed to be surjective. The graphical representation of this assumption is that a right angle made of a horizontal and a vertical arrow issuing from a common corner in $\mathcal B$ can always be completed (not necessarily uniquely) to a square in $\mathcal Q$. This is known as the {\it filling condition}.
}
\end{rem}

\begin{example} {\rm A useful example of a double groupoid is the {\it coarse double groupoid} $\square({\mathcal H},{\mathcal V})$, generated by two groupoids, $\mathcal H$ and $\mathcal V$, with a common base set. It consists of all the possible squares that can be formed using the two groupoids as sides. Any double groupoid can be naturally mapped as an inclusion into the coarse double groupoid generated by its side groupoids.
}
\end{example}

\begin{definition}\label{def:core} {\rm 
The {\it core} ${\mathcal C}_{\mathcal Q}$ of a double groupoid $\mathcal Q$ is an ordinary groupoid consisting of the collection of all squares $\mathbb A$ for which the two source arrows $s$ and $\hat s$ are, respectively, the units of $\mathcal H$ and $\mathcal V$ at the point $X=\alpha_H\left(\hat{\alpha}_V({\mathbb A})\right)=\alpha_V\left(\hat{\alpha}_H({\mathbb A})\right)$.
}
\end{definition}

A square in the core groupoid, therefore, will have 3 coincident vertices, $W_1=W_2=W_3$, as represented in Figure \ref{fig:coredouble}. For this reason, the core can be given a canonical groupoid structure by regarding its arrows as pairs of arrows $(t,{\hat t})$ with common source $X=W_0$ and target point $Z=\beta_H\left(\hat{\beta}_V({\mathbb A})\right)=\beta_V\left(\hat{\beta}_H({\mathbb A})\right)= W_3$. The composition  law of two squares in the core groupoid (with non-trivial sides $t, {\hat t}$ and $t', {\hat t}'$) is the square with non-trivial sides $tt'$ and ${\hat t} {\hat t}'$. The identity of the core groupoid at a point $b \in {\mathcal B}$ is the square in ${\mathcal Q}$ all of whose sides are identities (two identities of $\mathcal H$ and two identities of $\mathcal V$ at $b$). In many respects the core groupoid characterizes the double groupoid \cite{brownmac}, although the physical meaning of the original side groupoids is not preserved when passing to the core.
\begin{figure}[H]
\begin{center}
\begin{tikzpicture} [scale=1.0]
\tikzset{->/.style={decoration={
  markings,
  mark=at position .85 with {\arrow{stealth'}}},postaction={decorate}}}
\draw[dashed, thick, ->] (0,0)--(3,0);
\draw[dashed, thick,->] (0,0)--(0,3);
\draw[thick,->] (3,0)--(3,3);
\draw[thick,->] (0,3)--(3,3);
\node at (0,0)[circle,fill,inner sep=2pt]{};
\node at (0,3)[circle,fill,inner sep=2pt]{};
\node at (3,0)[circle,fill,inner sep=2pt]{};
\node at (3,3)[circle,fill,inner sep=2pt]{};
\node[right] at (3,1.5) {$\hat t$};
\node[above] at (1.5,3) {$t$};

\node [below left] at (0,0) {$W_0$};
\node [above left] at (0,3) {$W_1$};
\node [below right] at (3,0) {$W_2$};
\node [above right] at (3,3) {$W_3$};

\begin{scope} [xshift= 7cm]
\node at (0,0)[circle,fill,inner sep=2pt]{};
\node at (3,3)[circle,fill,inner sep=2pt]{};
\draw [thick,->] (0,0) to [bend left=25] (3,3);
\node[above] at (1.5,2.2) {$t$};
\draw [thick,->] (0,0) to [bend left=-25] (3,3);
\node[below] at (1.5,0.8) {$\hat t$};
\node[below left] at (0,0) {$X$};
\node[above right] at (3,3) {$Z$};
\end{scope}

\end{tikzpicture}
\end{center}
\caption{An element in the core, where identity arrows are dashed.}
\label{fig:coredouble}
\end{figure}

\subsection{Double groupoids of composites}

Working out the case of binary weighted composites, we start from the horizontal and vertical material groupoids, ${\mathcal H} \rightrightarrows {\mathcal B}$ and ${\mathcal V} \rightrightarrows{\mathcal B}$, associated, respectively, with the material response functions ${\bf f}_1$ and ${\bf f}_2$ of the two constituents of the composite $\mathcal B$, as described in Section \ref{sec:weighted}. We note that each of these groupoids draws its arrows from the general linear group $GL(3, {\mathbb R})$.\footnote{More technically, these are subgroupoids of the 1-jet groupoid over $\mathcal B$.} We can, therefore, define the total set $\mathcal Q$ of the double groupoid of the composite as the subset of squares in the coarse double groupoid $\square({\mathcal H}, {\mathcal V})$ that satisfy the commutation condition.
\begin{equation}
{\hat t}\, s = t\,{\hat s}.
\end{equation}

Consider now the core ${\mathcal C}_{\mathcal Q}$ of the material double groupoid ${\mathcal Q}\rightrightarrows {\mathcal B}$ of a weighted binary composite $\mathcal B$. 

\begin{prop} {\rm A binary composite satisfies the condition of weighted uniformity if, and only if, the core of its material double groupoid is transitive.
}
\end{prop}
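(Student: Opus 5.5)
The plan is to translate both conditions into statements about pairs of arrows $(t,{\hat t})$, one in $\mathcal H$ and one in $\mathcal V$, joining the same two points, and then compare them directly. By Definition \ref{def:core}, a core square $\mathbb A$ with source corner $X$ has $s=\epsilon_H(X)$ and ${\hat s}=\epsilon_V(X)$. Both identity arrows are the identity matrix of $GL(3,{\mathbb R})$, so the commutation condition ${\hat t}\,s=t\,{\hat s}$ reduces to ${\hat t}=t$. Thus a core arrow from $X$ to $Z$ is exactly a linear map ${\bf P}$ that lies both in $\mathcal H$ and in $\mathcal V$ as an arrow from $X$ to $Z$. In other words, it is a material isomorphism from $X$ to $Z$ of both constituents, that is, an element of ${\mathcal P}^1_{XZ}\cap{\mathcal P}^2_{XZ}$.

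Once this identification is in place, both directions are immediate.

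\emph{Weighted uniformity implies a transitive core.} Given $X_1,X_2$, Definition \ref{def:weightedunif} supplies ${\bf P}\in{\mathcal P}^1_{12}\cap{\mathcal P}^2_{12}$. I would form the square with $s=\epsilon_H(X_1)$, ${\hat s}=\epsilon_V(X_1)$, $t={\bf P}\in\mathcal H$ and ${\hat t}={\bf P}\in\mathcal V$. I would check that its corners match: $t$ runs from $W_1=X_1$ to $X_2$, and ${\hat t}$ runs from $W_2=X_1$ to $X_2$. It satisfies the commutation condition, so it belongs to $\mathcal Q$. It lies in the core and goes from $X_1$ to $X_2$.

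\emph{A transitive core implies weighted uniformity.} Given $X_1,X_2$, transitivity gives a core arrow from $X_1$ to $X_2$. By the reduction above, its sides satisfy $t={\hat t}$, and this common map lies in ${\mathcal P}^1_{12}\cap{\mathcal P}^2_{12}$, so the intersection is non-empty.

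The main obstacle is the bookkeeping in the first step. I must confirm which corner is the core's source and which is its target, so that the orientation of $t$ and ${\hat t}$ matches the paper's convention that $s$ and ${\hat s}$ are the source sides. I must also justify that the units $\epsilon_H(X)$ and $\epsilon_V(X)$ are literally the identity in $GL(3,{\mathbb R})$, which holds because both groupoids sit inside the 1-jet groupoid over $\mathcal B$. Finally, I would note that the volume fractions $k_\alpha$ play no role in the argument, since both notions concern only the response functions ${\bf f}_1$ and ${\bf f}_2$.
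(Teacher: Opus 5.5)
Your proposal is correct and follows the paper's own argument. In both, commutativity of squares whose source sides are units forces $t={\hat t}$, so a core arrow from $X_1$ to $X_2$ is exactly an element of ${\mathcal P}^1_{12}\cap{\mathcal P}^2_{12}$, and both directions then follow from Definition \ref{def:weightedunif}. You just spell out the two directions and the corner bookkeeping (correctly taking $W_0=W_1=W_2$ as the coincident vertices) that the paper leaves implicit.
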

\begin{proof} Since every square of $\mathcal Q$ is commutative, it follows from Definition \ref{def:core} that every pair $(t, {\hat t}) \in {\mathcal C}_{\mathcal Q}$ satisfies the condition $t={\hat t}$. The proposition is an immediate consequence of Definition \ref{def:weightedunif}
\end{proof}

\section{Including the volume fractions}
\label{sec:volumefractions}

\subsection{The bundle of volume fractions}

In the case of a binary composite, the material composition at a point $X \in {\mathcal B}$ can be completely characterized by the ratio
\begin{equation}
r(X)=\frac{k_1(X)}{1-k_1(X)},
\end{equation}
a ratio which falls always in the open interval $(0, \infty)$. For symmetry, it is sometimes convenient to use the logarithmic measure $s(X)=\log{r(X)}$, which covers the whole real line $\mathbb R$. 

The material double groupoid $\mathcal Q$ introduced in Section \ref{sec:materialdouble} makes no reference to this structural ratio. A weighted binary composite with given ratio $s(X)$ can be considered as a section of the trivial fibre bundle ${\mathcal C}={\mathcal B} \times {\mathbb R}$, which we will call the bundle of volume fractions. We remark that in the presence of such phenomena as allotropic transformations, where two different crystal structures of the same material coexist, we can have a time evolution of the filed of structural ratios, which will be reflected in a time variation of the cross section of $\mathcal C$. In the case of $n$-ary composites with $n>2$, true chemical reactions can occur that will lead to similar time evolutions of an $n$-fold groupoid.

When looking for an entity that may serve as a mathematical support for such phenomena, it is clear that the material double groupoid $\mathcal Q$, in an of itself, will not provide this support. Indeed, to reflect the evolving structural ratios, the squares of $\mathcal Q$ will have to be endowed with additional parameters (most obviously, the structural ratios of the four corners, although other possibilities might be considered).

\subsection{Weighted groupoids}

Let $\mathcal B$ be a manifold and let $\pi: {\mathcal C} \to {\mathcal B}$ be a fibre bundle over $\mathcal B$ with projection map $\pi$. The typical fibre $\bf C$ need not be a group, but just represent some extra degrees of freedom whose meaning depends on the context. If ${\mathcal G} \rightrightarrows {\mathcal B}$ is a groupoid on $\mathcal B$ with source and target projections $\alpha,\beta:{\mathcal G} \to {\mathcal B}$, we define the extended groupoid ${\mathcal G}_C \rightrightarrows {\mathcal C}$ by declaring every arrow from $c_1 \in {\mathcal C}$ to $c_2 \in {\mathcal C}$ as a triple $(u, c_1,c_2)$, where $u\in {\mathcal G}$ with $\alpha(u)=\pi(c_1)$ and $\beta(u)=\pi(c_2)$. The source and target maps in ${\mathcal G}_C$ are simply defined as $\alpha_C((u,c_1,,c_2))=c_1$ and $\beta_C((u,c_1,,c_2))=c_2$. In a given physical context described by the goupod ${\mathcal G} \rightrightarrows {\mathcal B}$, the fibres may represent an additional internal variable not contemplated in the original groupoid.

Thus, the arrows of ${\mathcal G}_C$ can be regarded as precise lifts of the arrows of $\mathcal G$. The unit at $c \in {\mathcal C}$ is the lift of the unit at $\pi(c) \in {\mathcal B}$. Composition of two arrows in ${\mathcal G}_C$ is possible only if the tip-to-tail condition is satisfied for the three entries of the corresponding triples.. In view of our intended application we will refer to  ${\mathcal G}_C$ as a weighted groupoid. It should be noted that even if there is just one arrow in $u \in \mathcal G$ with source $b_1$ and target $b_2$, the collection of its lifted versions in ${\mathcal G}_C$ will be in a one-to-one correspondence with ${\bf C}^2$. Some of the features of the extended groupoid ${\mathcal G}_C$ are represented in Figure \ref{fig4} .

\begin{figure}[H]
\begin{center}
\begin{tikzpicture} [scale=1.]
\tikzset{->/.style={decoration={
  markings,
  mark=at position .95 with {\arrow{stealth'}}},postaction={decorate}}}
\draw[thick] (0,0) to [bend left=15] (10,0);
    \node[below right] at (10,0) {$\mathcal B$};
    \foreach \angle in {108, 102, ..., 72} {
        \draw  ([shift=(\angle:16.25) ](5,-15)--([shift=(\angle:20)](5,-15);}
 \draw[->] (0.15,0.06) to [bend left] (1.70,0.43);
  \draw[->] (1.70,0.43) to [bend left] (3.40,0.70);
  \node[below] at (0.15,0.06) {$b_1$};
   \node[below] at (1.70,0.43) {$b_2$};
      \node[below] at (3.40,0.70) {$b_3$};
   \draw[thick,->] (-0.5,2) to [bend left=45] (1.15,3.1);
   \draw[thick,->] (1.15,3.1) to [bend left=30] (3.,3.5);
   \draw[dashed, ->] (-0.5,2) to [bend left] (0.15,0.06);
   \node [right] at (0.1,1.2) {$\pi$};
   \node [left] at (-0.5,2) {$c_1$};
   \node [above] at (0.83,0.40) {$u$};
   \node[above, rotate=30] at (0.3,2.9) {$_{(u,c_1,c_2)}$};
   \node[below right] at (1.15,3.25) {$c_2$};
   \node[right] at (3.0,3.5) {$c_3$};
   \node[below] at (6.60,0.70) {$b$};
\draw [->] (6.60,0.70) arc (-90:250:0.2); 
\node[above right] at (6.73,  0.70) {$\epsilon(b)$};
\draw [thick, ->] (6.85,3) arc (0:290:0.3);
\draw[thick] (6.55,2.7) arc (270:359:0.3);
\node [right] at (6.85,2.95) {$c$};
\draw[thick, ->] (6.85,3) to [bend right] (7.04,4.5);
\node [right] at  (7.04,4.5) {$c'$};
\node[right] at (7.1,3.75) {$_{(\epsilon(b), c, c')}$};
\node [above] at (6.2, 3.2) {$_{\epsilon_C(c)}$};
\node [above] at (7.1,5.) {${\mathcal C}_b$};
\end{tikzpicture}
\end{center}
\captionsetup{justification=centering}
\caption{Groupoid on $\mathcal B$ extended to a bundle ${\mathcal C}\to {\mathcal B}$. \\ The unit at $c$ is the triple $\epsilon_C(c)=(\epsilon(b),c,c)$.}
\label{fig4}
\end{figure}

\begin{rem} {\rm This construction is most probably equivalent to the Cartesian product of the groiupoid ${\mathcal G}\rightrightarrows{\mathcal B}$ times the pair groupoid ${\bf C}\times {\bf C} \rightrightarrows {\bf C}$.
}
\end{rem}

\subsection{The weighted material double groupoid of a composite}

Given the underlying material groupoids ${\mathcal H}\rightrightarrows {\mathcal B}$ and ${\mathcal V} \rightrightarrows{\mathcal B}$ of the two constituents of a composite, we construct their extended groupoids ${\mathcal H}_R\rightrightarrows{\mathcal B}\times{\mathbb R}$ and ${\mathcal V}_R\rightrightarrows {\mathcal B} \times {\mathbb R}$ as described above. Using these as side groupoids, we obtain the weighted material double groupoid ${\mathcal Q}_R$ of the weighted binary composite. A schematic description can be gathered from Figure \ref{fig5}. By analogy with the terminology of \cite{natale}, we designate the original, non-weighted, double groupoid $\mathcal Q$ as the {\it frame} of ${\mathcal Q}_R$. The collection of squares in ${\mathcal Q}_R$ hovering over a single square in the frame is in a one-to one correspondence with ${\mathbb R}^4$. This feature is reminiscent of, but not equivalent to, the concept of non-slim double groupoid, which is a much more sophisticated construct. Indeed, as defined, the new groupoid ${\mathcal G}_C$ is technically still a slim groupoid, but from the point of view of an ``observer'' unaware of the existence of the fibres, the squares of the new groupoid are not uniquely determined by their (projected) sides. 

\begin{figure}[H]
\begin{center}
\begin{tikzpicture} [scale=1.0]
\tikzset{->/.style={decoration={
  markings,
  mark=at position .95 with {\arrow{stealth'}}},postaction={decorate}}}
  \draw[fill=gray!20] (0,0.2) -- (10,0.2) --(12.4,3.2)--(2.4,3.2)--cycle;
  \foreach \x in {0,2,...,8}
  \foreach \y in {0,...,2}
  \draw (\x-0.5+2+0.8*\y,0.9+\y)--(\x-0.5+2+0.8*\y,5+\y);
  \foreach \x in{0,2}
  \foreach \y in {0,1}
  \draw[->] (3.5+\x+0.8*\y,0.5+\y) to [bend left] (5.5+\x+0.8*\y,0.5+\y);
 \foreach \x in{0,2,4}
    \draw[->] (3.5+\x,0.5) to [bend left=25] (4.3+\x,0.5+1);
    \draw[->, thick]  (3.5,3.5) to  [bend left] (5.5,4);
    \draw[->, thick] (5.5,4) to[bend left] (6.3,5);
    \draw[->, thick]  (3.5,3.5) to  [bend left=15] (4.3,5.5);
    \draw[->, thick]  (4.3,5.5) to  [bend left=15] (6.3,5);
    \draw[->, thick]  (6.3,5) to  [bend left] (8.3,5.75);
    \draw[->, thick]  (5.5,4) to  [bend left] (7.5,3.5);
    \draw[->, thick]  (7.5,3.5) to  [bend left=15] (8.3,5.75);
    \node at ((9.8,0.5) {$\mathcal B$};
\end{tikzpicture}
\end{center}
\caption{Squares in a weighted double groupoid and in its frame}
\label{fig5}
\end{figure}

\subsection{Geometric picture of material evolution}
\label{sec:evolution}

If it is assumed that a material body preserves its point-wise material properties with the passage of time, what is the most general possible kind of material evolution that such a body can undergo? Moreover, what are real-life phenomena that exemplify this kind of evolution? The most common, but not the only, examples of this sort arise in the field of biological tissues as they undergo processes of growth and remodelling. In the case of remodelling, no new material is generated, but the existing material re-accommodates itself under the influence of internal or external demands. In the familiar orthodontic procedure, for instance, the trabeculae of the spongy (cancellous) bone alter their orientation to more efficiently carry the applied forces, eventually producing the desired effect.

If the body is materially uniform and made of a single component, remodelling changes manifest themselves in alterations of the material groupoid. More graphically, the arrows of the material groupoid are functions of time, always remaining within the general linear group. These changes stem from the right action of this group on the constitutive function. In a composite, on the other hand, the arrows of each side groupoid will evolve in general along different paths, that is, by the right action of different time-dependent elements. For example, one of the constituents may not evolve at all while another may undergo a non-trivial evolution. This non-harmonized evolution may be the source of increasing or decreasing misalignment between the constituents. A cartoon of such process would be a collection of pairs of earthworms wiggling as they try to free themselves from restraining pins placed at both their ends.

In the case of a weighted composite a new degree of freedom is available to the evolution process, namely, the volume fractions. How is this phenomenon reflected in our new geometric entity, that is, the extended groupoid ${\mathcal G}_C\rightrightarrows{\mathcal B}$? While the arrows are busy wiggling (horizontally, as it were) within the frame ${\mathcal G}$, the (smooth) section of the fibre bundle moves (vertically, as it were) according to laws of evolution that may, or may not, be coupled with the evolution of the arrows below.

\section {Non-slim double groupoids}
\label{sec:nonslim}

Although not explicitly stated, the concept of double groupoid presented in Section \ref{sec:double} presupposes that every square in the double groupoid is completely defined by its sides. Indeed, the composition laws represented in Equations (\ref{dg3}) and (\ref{dg4}) stipulate outcomes defined exclusively in terms of their side arrows. In the terminology of \cite{natale} a double groupoid is said to be {\it slim} if its squares are completely determined by their sides.

The structure of non-slim double groupoids, namely double groupoids in which more squares that one have the same sides, has been completely determined (at least for the discrete case) in the original study of \cite{natale}. One of the main results of that study is that the extra degree of freedom available in a non-slim double groupoid is controlled by an Abelian group bundle $\mathcal K$ over the base set $\mathcal B$, with typical fibre $\bf K$.

Given any non-slim double groupoid $\mathcal Q$ the collection $\mathcal F$ of all the squares in the coarse double groupoid $\square({\mathcal H},{\mathcal V})$ determined by the sides of the squares in $\mathcal Q$ is a (slim) double groupoid called the {\it frame} of $\mathcal Q$. The map $\Pi:{\mathcal Q} \to {\mathcal F}$ is non-injective, since $\mathcal Q$ is not slim. A cross section $\mu: {\mathcal F} \to {\mathcal Q}$ assigns to each square $\mathbb A$ in the frame a square ${\mathbb A}'=\mu({\mathbb A})$ in $\mathcal Q$ such that  $\Pi({\mathbb A}')={\mathbb A}$.

The extensions of slim double groupoids into non-slim ones rigorously introduced in \cite{natale} contemplate in general that the composition operations between squares in the non-slim groupoid may involve an entanglement, as it were, of the composition operations in both the Abelian group bundle $\mathcal K$ and the frame $\mathcal F$. The simplest, always available, option is to consider $\mathcal K$ as the product bundle ${\mathcal B}\times {\bf K}$ and the non-slim groupoid as the Cartesian product ${\mathcal Q}={\mathcal F} \times {\bf K}$. Simply put, the squares in $\mathcal Q$ are ordered pairs of a square in the frame and an element in $\bf K$. Moreover, the composition of two squares of the non-slim groupoid is the pair consisting of the composition of the corresponding squares in the frame, and the sum of the corresponding values of the factors.

 If we assume that in our application the typical fibre of $\bf K$ is identified with the group ${\mathbb R}^4$, it is clear that the addition of structural ratios at the corners conveys no physical meaning. It is for this reason that the idea of using non-slim extensions of the material double bundle of a composite will be abandoned for now in favour of a less restrictive kind of generalization that still keeps some of the flavour of non-slim groupoids, most significantly the property that the squares of the generalized double groupoid $\mathcal Z$ are not uniquely determined by the sides of the corresponding squares in $\mathcal Q$. Nevertheless, in Section \ref{sec:fees} the non-slim paradigm of \cite{natale} will be shown to be useful for a different application.

\subsection{A model of transplant costs}
\label{sec:fees}

In this section, we offer a potential application of the extension of a slim double groupoid introduced by \cite{natale}. For the sake of clarity, let us start from a single (uniform) constituent (a 1-ary composite, if you will). Let us now suppose that each arrow $u$ in the material groupoid ${\mathcal Z} \rightrightarrows {\mathcal B}$, that is, each material isomorphism, is an actual operation of transplant whereby a small material element is cut around $s=\alpha(u)$ and then implanted at $t=\beta(u)$. This operation carries two energetic costs (or benefits), a cost of harvesting at $s$ and a cost of implanting at $t$. In this way, each point $b$ of $\mathcal B$ is assigned two real parameters, $\omega(b)$ and $\chi(b)$, respectively, to be associated to each departing or arriving arrow at $b$. We assume the range of these parameters to be the real line.

Let us refer to the original groupoid ${\mathcal Z} \rightrightarrows {\mathcal B}$ as the {\it material frame} groupoid. We intend to build an extension ${\mathcal Q} \rightrightarrows{\mathcal B}$ of this groupoid that takes into consideration the transplant costs just described. The arrows in $\mathcal Q$ will be denoted with capital letters, and their composition with the operator $\circ$. Each arrow $u \in {\mathcal Z}$, when assigned some corner weights $\omega(\alpha(u))$ and $\chi(\beta(u))$, gives rise to an arrow $U \in {\mathcal Q}$. For notational economy let us denote $\omega(U)=\omega(\alpha(u))$ and $\chi(U)=\chi(\beta(u))$. Consider two composable arrows $u, v \in{\mathcal Z}$. Their composition $vu$ results in an arrow of $\mathcal Z$  such that $\alpha(vu)=\alpha(u)$ and $\beta(vu)=\beta(v)$. The costs of this arrow are $\omega(\alpha(u))$ and $\chi(\beta(v))$. On the other hand, if we take into consideration the additive nature of costs of harvesting and implanting, we have the pairs 
\begin{equation}
U=[u, (\omega(U), \chi(U))],
\end{equation}
and
\begin{equation}
V=[v, (\omega(V), \chi(V))],
\end{equation}
whose composition in $\mathcal Q$ is
defined as
\begin{equation}
V \circ U=[vu, (\omega(U)+\omega(V), \chi(U)+\chi(V))].
\end{equation}

Since there is an infinity of ways to arrive from a source to a target of an implant (by breaking it into different partial processes) we conclude that to one and the same arrow in the frame $\mathcal Z$ we can associate an infinity of vectors (with the same source and target) in the extended groupoid $\mathcal Q$. This feature is illustrated schematically in Figure \ref{fig6}. Notice that, contrary to what is depicted in Figure \ref{fig4}, the tip-to-tail condition appears not to be satisfied in $\mathcal Q$. Nevertheless, the base of the extended groupoid remains always the base $\mathcal B$, and the source and target maps are the same for both groupoids $\mathcal Z$ and $\mathcal Q$. 

\begin{figure}[H]
\begin{center}
\begin{tikzpicture} [scale=1.25]
\tikzset{->/.style={decoration={
  markings,
  mark=at position 0.6 with {\arrow{stealth'}}},postaction={decorate}}}
\draw[thick] (0.8,0.45)-- (10,0.45);
  \foreach \x in {0,2,...,8}
  \foreach \y in {0,0}
  \draw (\x-0.5+2+0.8*\y,1.4+\y)--(\x-0.5+2+0.8*\y,6+\y);
  \foreach \x in{0,2}
   \draw[->] (3.5+\x,0.5) to [bend left] (5.5+\x,0.5);
   \draw[dashed,->] (3.5,0.5) to [bend left=45] (7.5,0.5);
 \foreach \x in{0,2,4}
    \draw[->, thick]  (3.5,2.5) to  [bend left=15] (5.5,3);
    \draw[->, thick]  (5.5,2) to  [bend left] (7.5,3.);
    \draw[thick, ->] (3.5,2.5+2) to [bend left]  (7.5, 2.5+3.);
    \draw[dashed, thick, ->] (3.5,2.5) to  [bend left] (7.5,3);
    \node[right] at ((10.2,0.5) {$\mathcal B$};
    \node[above] at (4.6,0.75) {$u$};
        \node[above] at (6.6,0.75) {$v$};
  \node[above right] at (5.9,1.2) {$z=vu$};
  \node[below] at (3.5,0.45) {$b_1$};
    \node[below] at (5.5,0.45) {$b_2$};
      \node[below] at (7.5,0.45) {$b_3$};
      \node[below] at (4.6,2.8) {$U$};
            \node[below] at (6.6,2.8) {$V$};
                  \node[above] at (6.,3.4) {$Z$};
                  \node[above] at (6.2,5.7) {$V \circ U$};
             
\end{tikzpicture}
\end{center}
\caption{Extension of a groupoid on $\mathcal B$}
\label{fig6}
\end{figure}

This model can be extended also to a double groupoid. Assuming a binary composite with spatially constant volume fractions, the transplant cost (assuming it has a concrete meaning) should be based on the consideration of both components, particularly if the composite is not uniform. Given a square $\mathbb A$ in the frame $\mathcal Q$, and considering that each node can be assigned two weights, the set of squares sharing the sides of $\mathbb A$ is in a one-to-one correspondence with ${\mathbb R}^8$. It can be verified that the resulting structure is a non-slim double groupoid in the sense of \cite{natale}. A schematic representation is given in Figure \ref{fig7}.

\begin{figure}[H]
\begin{center}
\begin{tikzpicture} [scale=1.0]
\tikzset{->/.style={decoration={
  markings,
  mark=at position .75 with {\arrow{stealth'}}},postaction={decorate}}}
  \draw[fill=gray!20] (0,0.2) -- (10,0.2) --(12.4,3.2)--(2.4,3.2)--cycle;
  \foreach \x in {0,2,...,8}
  \foreach \y in {0,...,2}
  \draw (\x-0.5+2+0.8*\y,1.25+\y)--(\x-0.5+2+0.8*\y,8+\y);
  \foreach \x in{0,2}
  \foreach \y in {0,1}
  \draw[->] (3.5+\x+0.8*\y,0.5+\y) to [bend left] (5.5+\x+0.8*\y,0.5+\y);
 \foreach \x in{0,2,4}
    \draw[->] (3.5+\x,0.5) to [bend left=25] (4.3+\x,0.5+1);
    \draw[->, thick]  (3.5,3.5) to  [bend left] (5.5,4);
    \draw[->, thick] (5.5,4) to[bend left] (6.3,5);
    \draw[->, thick]  (3.5,3.5) to  [bend left=15] (4.3,5.5);
    \draw[->, thick]  (4.3,5.5) to  [bend left=15] (6.3,5);
    \draw[->, thick]  (6.3,3.85) to  [bend left] (8.3,4.5);
    \draw[->, thick]  (5.5,3) to  [bend left] (7.5,3.5);
   \draw[->, thick]  (5.5,3) to  [bend left=35] (6.3,3.85);
    \draw[thick,->] (7.5,3.5) to  [bend left] (8.3,4.5);
    \draw[thick,->] (3.5, 6.5) to [bend left] (7.5,7.5);
    \draw[thick,->] (3.5,6.5) to  [bend left=35] (4.3, 7.5);
        \draw[thick,->] (4.3, 7.5) to [bend left] (8.3,7.6);
            \draw[thick,->] (7.5, 7.5) to [bend left] (8.3,7.6);
    \node at ((9.8,0.5) {$\mathcal B$};
    \draw[dashed, ->] (3.5,0.5) to [bend left=45] (7.5,0.5);
        \draw[dashed, ->] (4.3,1.5) to [bend left=45] (8.3,1.5);
        \draw[thick,dashed,->] (3.5,3.5) to [bend left=55] (7.5,3.5);
                \draw[thick,dashed,->] (4.3,5.5) to [bend left] (8.3,4.5);
\end{tikzpicture}
\end{center}
\caption{Squares in a non-slim double groupoid and in its frame}
\label{fig7}
\end{figure}

\section{Conclusions}

A mathematical framework has been proposed to provide a unified treatment of the mechanics of a large class of composite materials. Although the treatment has been confined to the case of binary composites, many of the features can be extended to the $n$-ary case for $n>2$. The main mathematical structure invoked is the theory of double groupoids which was previously shown to fit the description of binary composites and to characterize and quantify the presence of defects of misalignment between the constituents. This theoretical scaffold has been shown to be extendable to composites of spatially variable volume fractions, thus opening the way to the eventual inclusion of chemical reactions between the constituents.


\newpage
\begin{thebibliography}{99}

\bibitem{natale} Andruskiewitsch N and Natale S (2009), The structure of double groupoids, {\it Journal of Pure and Applied Algebra} {\bf 213}, 1031--1045.

\bibitem{bowen1} Bowen R M (1967), Toward a Thermodynamics and Mechanics of Mixtures, {\it Archive for Rational Mechanics and Analysis} {\bf 24}, 370--403.

\bibitem{bowen2} Bowen R M (1976), Theory of Mixtures, {\it in Continuum Physics, III, Edited by A.C. Eringen}, 1--127.

\bibitem{brandt} Brandt H (1927), \H{U}ber eine Verallgemeinerung des Gruppenbegrieffes, {\it Mathematische Annalen} {\bf 96}, 360--366.

\bibitem{brownmac} Brown R and Mackenzie K C H (1992), Determination of a double Lie groupoid by its core diagram, {\it Journal of Pure and Applied Algebra} {\bf 80}, 237--272.

\bibitem{symmetry} Epstein M (2021), Material geometry of binary composites, {\it Symmetry} {\bf 13}/5, 892; https://doi.org/10.3390/sym13050892.

\bibitem{jmps} Epstein M (2022), The theory of continuous distributions of composite defects. {\it Journal of the Mechanics and Physics of Solids} {\bf 167}, 104983. 

\bibitem{book} Epstein M (2026) {\it An Elementary Introduction to n-fold Groupoids}, Springer.

\bibitem{eandmdel} Epstein M and de Le\'on M (2000), Homogeneity without uniformity: toward a mathematical theory of functionally graded materials, {\it International Journal of Solids and Structures} {\bf 37}, 7577--7591.

\bibitem{eande} Epstein M and El\.zanowski M Z (2025), Material evolution of defective composites
{\it Mathematics and Mechanics of Solids} {\bf 30}/2, 202--210.

\bibitem{victor} Jim\'enez VM, de Le\'on M and Epstein M (2025), Double groupoids of composites: applications to uniformity, {\it Geometric Mechanics}, 10.1142/S297245892650005X, 1--29.

\bibitem{noll} Noll W (1967), Materially Uniform Bodies with Inhomogeneities, {\it Archive for Rational Mechanics and Analysis} {\bf 27}, 1-32.

\bibitem{weinstein} Weinstein A (2000), Unifying internal and external symmetry, {\it Notices of the American Mathematical Society} {\bf 47}/7, 744--752.

\end{thebibliography}
\end{document}